\def\f12{\frac 1 2}
\def\hh{\mathcal{H}^{+}}
\def\r{\mathcal{R}}
\def\f12{\frac 1 2}
\newcommand{\lapp}{\mbox{$\triangle \mkern-13mu /$\,}}
\newtheorem{mytheo}{Theorem}
\newtheorem{myp}{Proposition}
\begin{document}
\title{A note on instabilities of  extremal black holes\\ under scalar perturbations from afar}

\author{Stefanos Aretakis\thanks{Princeton University, Department of Mathematics, Fine Hall, Washington Road, Princeton, NJ 08544, USA.}\thanks{  Institute for Advanced Study, Einstein Drive, Princeton, NJ 08540, USA.}}

\date{December 2, 2012}
\maketitle

\begin{abstract}
In previous work of the author it was shown that instabilities of solutions to the wave equation develop asymptotically along the event horizon of extremal Kerr provided a certain expression $H_{0}$ of the initial data is non-trivial on the horizon.  In this note we  remove this restriction by showing that instabilities  develop even from initial data supported arbitrarily far away from the horizon (for which, in particular, $H_{0}=0$). The latter instabilities concern one order higher derivatives compared to the case where $H_{0}\neq 0$. The result also applies to extremal Reissner--Nordstrom.   This note was motivated by numerical analysis of Lucietti, Murata, Reall and Tanahashi. 

\end{abstract}

\section{Introduction}
\label{sec:Introduction}

The wave equation on black hole backgrounds has attracted significant interest of the mathematical community  during the past decade in view of its intimate relation with the stability of the spacetimes themselves in the context of the Cauchy problem for the Einstein equations. Specifically, one is interested in the decay properties of solutions $\psi$ to the wave equation
\begin{equation}
\Box_{g}\psi=0
\label{we}
\end{equation}
in the domain of outer communications up to and including the event horizon $\hh$.

 \begin{figure}[H]
	\centering
		\includegraphics[scale=0.127]{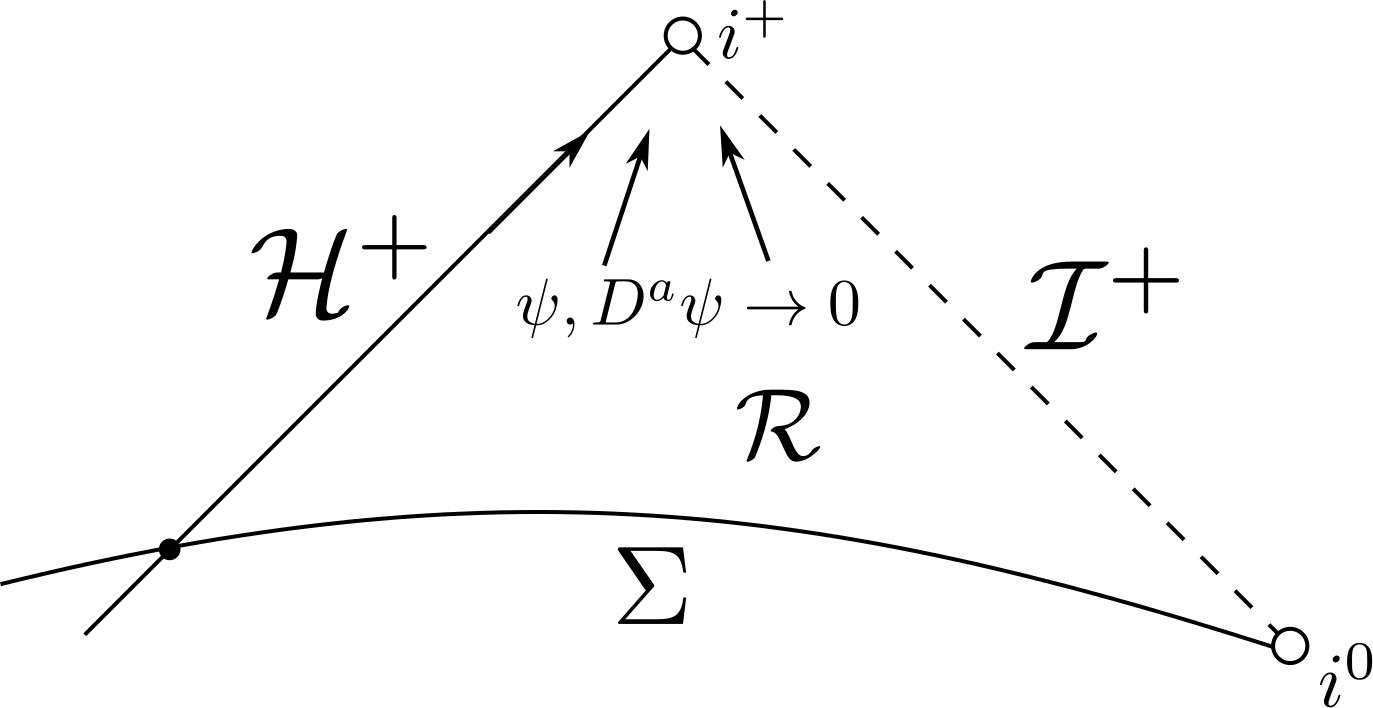}
	\label{fig:ekerr2cd1}
\end{figure}
The study of the wave equation on black holes goes back to Regge and Wheeler \cite{regge} in 1957. The first complete boundedness result for solutions $\psi$ to \eqref{we}, on Schwarzschild backgrounds, without any restriction of the initial data on $\hh$ was due to Kay and Wald \cite{wa1} (1987) and the first complete quantitative decay result  for such solutions appeared in 2005 \cite{redshift} by Dafermos and Rodnianski. The latter paper initiated the use of multiplier vector fields which capture the so-called \textit{redshift effect} along  the horizon $\hh$. 
 Since then impressive progress has been done for more general spacetimes with contributions by many authors; for a review, with many references, see \cite{lecturesMD}.  
  This work has in some sense culminated in \cite{tria} where uniform boundedness and decay estimates for $\psi$ and all its derivatives are derived (up to and including the event horizon) for general subextremal Kerr backgrounds with parameters $|a|<M$.

%\footnote{Previous numerical work in the extremal Reissner--Nordstr\"{o}m was done by Bi\v{c}\'{a}k \cite{other2}. Furthermore, Blue and Soffer \cite{blu1} proved estimates which however degenerate at the horizon. For more about extremal black holes see \cite{marolf}.}
%
%

The wave equation on extremal black holes (for which the redshift effect degenerates at the horizon $\hh$) was much less studied until recently; see however \cite{other2,   blu1,jacobson, marolf}. The general study of the wave equation on such backgrounds was initiated by the author in a series of papers \cite{SA10, aretakis1, aretakis2, aretakis3,aretakis4}, where it was shown that solutions of \eqref{we} exhibit both stability and instability properties. Specifically, it was shown that general solutions on extremal Reissner--Nordstr\"{o}m and axisymmetric solutions on extremal Kerr  decay pointwise towards the future. On the other hand, it was also shown that, if $Y$ is a translation-invariant\footnote{Translation-invariant here means that $Y$ commutes with the stationary Killing field $T$, i.e.~$[Y,T]=0$.} transversal to $\hh$ derivative, then  \underline{for initial data for which a certain expression  is non-zero on $\hh$} then 
 \begin{equation}
\begin{split}
&\text{Non-decay: }\ \ \ \ \sup_{S_{v}}|Y\psi|\nrightarrow 0\\
&\text{Blow-up: } \ \ \ \ \sup_{S_{v}}|Y^{k}\psi|\rightarrow +\infty,\, k\geq 2, 
\end{split}
\label{pr}
\end{equation}
along $\hh$ as the advanced time $v\rightarrow +\infty$. Here $S_{v}$ denotes the (spherical) section of the horizon $\hh$ at advanced time $v$. As far as the instability results are concerned,  the axisymmetry assumption for solutions on extremal Kerr can be dropped by simply projecting on the zeroth azimuthal frequency.  

The origin of the instability results is the existence of conserved quantities along extremal horizons\footnote{These conservation laws have recently been extended to the Teukolsky equation on extremal Kerr by Lucietti and Reall \cite{hj2012}. The authors  have also provided extensions of the conservation laws for the wave equation to more general extremal horizons in all dimensions. Murata \cite{murata2012} has very recently provided further generalizations of the conservation laws.} $\hh$. Hence, if the conserved quantities are initially non-zero then they are everywhere non-zero on $\hh$. A question that was raised by Dain and Dotti \cite{dd2012} is whether the instabilities \eqref{pr} develop from initial data which are supported away from the horizon (and hence the conserved quantities are initially--and thus everywhere--zero on $\hh$). In the latter case, the authors of \cite{dd2012} derived a simple proof of the pointwise boundedness of $\psi$ on extremal Reissner--Nordstr\"{o}m by extending Wald's argument \cite{drimos}. Returning to the question of Dain and Dotti,   Lucietti, Murata, Reall and Tanahashi \cite{hm2012} performed numerical analysis which suggests that instabilities develop even from initial data supported away from the horizon of extremal Reissner--Nordstr\"{o}m.  Motivated by \cite{hm2012}, we rigorously show  the validity of this scenario which we in fact extend to extremal Kerr backgrounds. 

In this note we show the following

\begin{mytheo}
Consider the extremal Kerr black hole with parameters $a,M$ such that $|a|=M$. Let $\Sigma$ be a  spacelike hypersurface  which either crosses the event horizon and satisfies the assumptions of Section \ref{sec:TheInitialHypersurfaceSigma} or coincides with $t=0$, where $t$ denotes the Boyer--Lindquist time coordinate and $K=\Sigma\cap \left\{R_{1}\leq r\leq R_{2}\right\}$, where $M<R_{1}<R_{2}$. Let $Y$ denote a translation-invariant  vector field (that is $[Y,T]=0$, where $T$ is the stationary Killing vector field) transversal to $\hh$. Then, for generic smooth initial data supported in $K$ we obtain that the solution $\psi$ to the wave equation \eqref{we} satisfies:
\begin{equation*}
\begin{split}
&\text{Non-decay: }\ \ \ \ \sup_{S_{v}}|YY\psi|\nrightarrow 0\\
&\text{Blow-up: } \ \ \ \ \sup_{S_{v}}|Y^{k}\psi|\rightarrow +\infty,\, k\geq 3, 
\end{split}
\end{equation*}
along $\hh$ as the advanced time $v\rightarrow +\infty$.  Here $S_{v}$ denotes the (spherical) section of the horizon $\hh$ at advanced time $v$.
\label{theo1}
\end{mytheo}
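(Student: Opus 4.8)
The plan is to reduce the entire statement to a triangular hierarchy of transport equations along $\hh$ for the transversal derivatives $Y^{k}\psi$, and to run this hierarchy against the sharp late-time decay of $\psi$. I would treat extremal Reissner--Nordstr\"{o}m first, the Kerr case following after projecting onto the zeroth azimuthal frequency $m=0$ (which, as noted in the introduction, also removes the need for axisymmetry). Restricting \eqref{we} and its successive $Y$-derivatives to $\hh$ and using $[Y,T]=0$ produces a system of the schematic form
\begin{equation*}
\pa_{v}\big(Y^{k}\psi\big)=c_{k}\,Y^{k-1}\psi+\mathcal{B}_{k}\big(\psi,\dots,Y^{k-2}\psi\big)\qquad\text{on }\hh,\ \ k\geq 1,
\end{equation*}
where the constants $c_{k}$ are nonzero \emph{precisely because} the redshift degenerates at an extremal horizon, and $\mathcal{B}_{k}$ collects lower-order and angular contributions. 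The bottom level $k=1$ encodes the Aretakis conservation law: a combination $H_{0}[\psi]=\big(Y\psi+\mathcal{L}\psi\big)\big|_{\hh}$ (with $\mathcal{L}=\tfrac1M$ in Reissner--Nordstr\"{o}m) is constant in $v$ along $\hh$.

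First I would check that for data supported in $K=\Sigma\cap\{R_{1}\leq r\leq R_{2}\}$ with $R_{1}>M$ one has $H_{0}[\psi]\equiv 0$: by finite speed of propagation the solution has not reached $\hh$ at early advanced times, so $H_{0}$ vanishes there, and conservation propagates this to all $v$. This forces the slaving relation $Y\psi|_{\hh}=-\mathcal{L}\,\psi|_{\hh}$, so $Y\psi$ decays at the same integrable rate as $\psi$ itself; this is exactly why the first-order instability \eqref{pr} switches off. The crux is the next rung: substituting the slaving relation into the $k=2$ equation collapses its right-hand side to a spatial operator applied to $\psi|_{\hh}$ alone, and since $Y^{2}\psi|_{\hh}$ also vanishes at the initial horizon sphere, integration in $v$ gives
\begin{equation*}
Y^{2}\psi\big|_{\hh}(v)=\int_{v_{0}}^{v}\big(c\,\psi\big|_{\hh}\big)(v')\,dv'+o(1),\qquad v\to\infty .
\end{equation*}
Hence the non-decay of $Y^{2}\psi$ is equivalent to the assertion that the time-integrated horizon flux $\int_{v_{0}}^{\infty}\psi|_{\hh}\,dv'$ does not tend to zero. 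Granting this, I would feed the resulting nonzero limit back into the $k=3$ equation to obtain $Y^{3}\psi|_{\hh}\sim c'v$, and inductively $Y^{k}\psi|_{\hh}\sim v^{k-2}$ for $k\geq 3$, which is the claimed blow-up.

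It remains to secure two quantitative inputs. The first is a sufficiently sharp pointwise decay rate for $\psi|_{\hh}$ and its $T$-derivatives so that the displayed integral is controlled; here I would invoke the decay estimates on extremal Reissner--Nordstr\"{o}m and extremal Kerr from the author's earlier work, projected to $m=0$. The second, and genuinely the main obstacle, is to show that the linear functional $\mathrm{data}\mapsto\lim_{v\to\infty}\int_{v_{0}}^{v}\psi|_{\hh}\,dv'$ is \emph{not} identically zero on data supported in $K$; its (necessarily proper, closed) kernel is then the non-generic set excluded in the statement. I would identify this functional with a conserved charge of Newman--Penrose type at null infinity---equivalently, read off from the stationary $T$-primitive $\Phi$ of $\psi$ on the horizon---which is a specific nonvanishing moment of the Cauchy data and hence nonzero off a proper subspace. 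The Kerr case demands in addition handling the angular coupling in the $\mathcal{B}_{k}$, but the $m=0$ projection together with the same degenerate transport structure lets the argument proceed once the sharp $m=0$ decay is in hand; the hard analytic point throughout is controlling the tail with enough precision to pin down the sign/nonvanishing of the data functional rather than merely its existence.
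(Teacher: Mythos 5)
Your reduction to a transport hierarchy along $\hh$ and your identification of the decisive quantity --- the limit of $\int_{v_0}^{v}\psi|_{\hh}\,dv'$, equivalently the Aretakis charge of a $T$-primitive of $\psi$ --- correctly describes the mechanism that drives the result. But your proposal stops exactly where the real work begins: you never prove that this functional is nonzero for \emph{some} (hence generic) data supported in $K$. The assertion that it ``is a specific nonvanishing moment of the Cauchy data'' is not established, and the suggested identification with a Newman--Penrose-type charge at null infinity is doubly problematic: for data supported in the compact set $K$ the NP constant of $\psi$ itself vanishes, so the relevant charge would have to be that of the $T$-primitive --- but constructing that primitive (solving $T\phi=\psi$, $\Box_g\phi=0$ globally, with controlled behavior both at $\hh$ and at infinity, following Wald's elliptic construction) on an \emph{extremal} background, and then computing its horizon charge, is precisely the unproved step. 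Your argument is therefore circular at its crux: the nonvanishing of your functional is what needs to be shown, and the object you invoke to show it is the one whose existence and properties you have not supplied.

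The paper closes this gap by reversing the construction: it prescribes the data of the primitive $\phi$ \emph{first} --- $\phi\equiv 1$, $T\phi=0$ on $\Sigma_{0}\cup\Sigma_{1}$ (near and inside the horizon), $\phi\equiv 0$ on $\Sigma_{3}$, arbitrary smooth axisymmetric data on $\Sigma_{2}=K$ --- and then defines $\psi=T\phi$. The nontrivial point (Proposition \ref{p11}) is that $\psi$ then has data supported in $K$: one must show $T\psi=TT\phi=0$ on $\Sigma_{0}\cup\Sigma_{1}$, which follows by restricting the wave operator to $\Sigma$ and using the sign condition \eqref{condition} on the coefficient $M^{2}\sin^{2}\theta-2(r^{2}+M^{2})h+\Delta h^{2}$; this is the reason the hypersurface assumptions of Section \ref{sec:TheInitialHypersurfaceSigma} appear in the theorem at all, a condition your proposal never engages with. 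With this explicit $\psi$ in hand, the charge $H_{0}[\phi]=2\neq 0$ gives $\int_{S_{v}}Y\phi\rightarrow\frac{1}{2M}$, hence $\int_{S_{v}}YY\psi\rightarrow-\frac{1}{4M^{3}}\neq 0$, and the blow-up for $k\geq 3$ follows from the transport hierarchy essentially as you outline; genericity is then pure linearity ($\xi+\epsilon\psi$ is unstable whenever $\xi$ is stable). So your outline is a correct account of the conditional structure, but without the reversed time-integral construction (or an equivalent proof that your limit functional is not identically zero on data supported in $K$) it does not constitute a proof.
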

 \begin{figure}[H]
	\centering
		\includegraphics[scale=0.16]{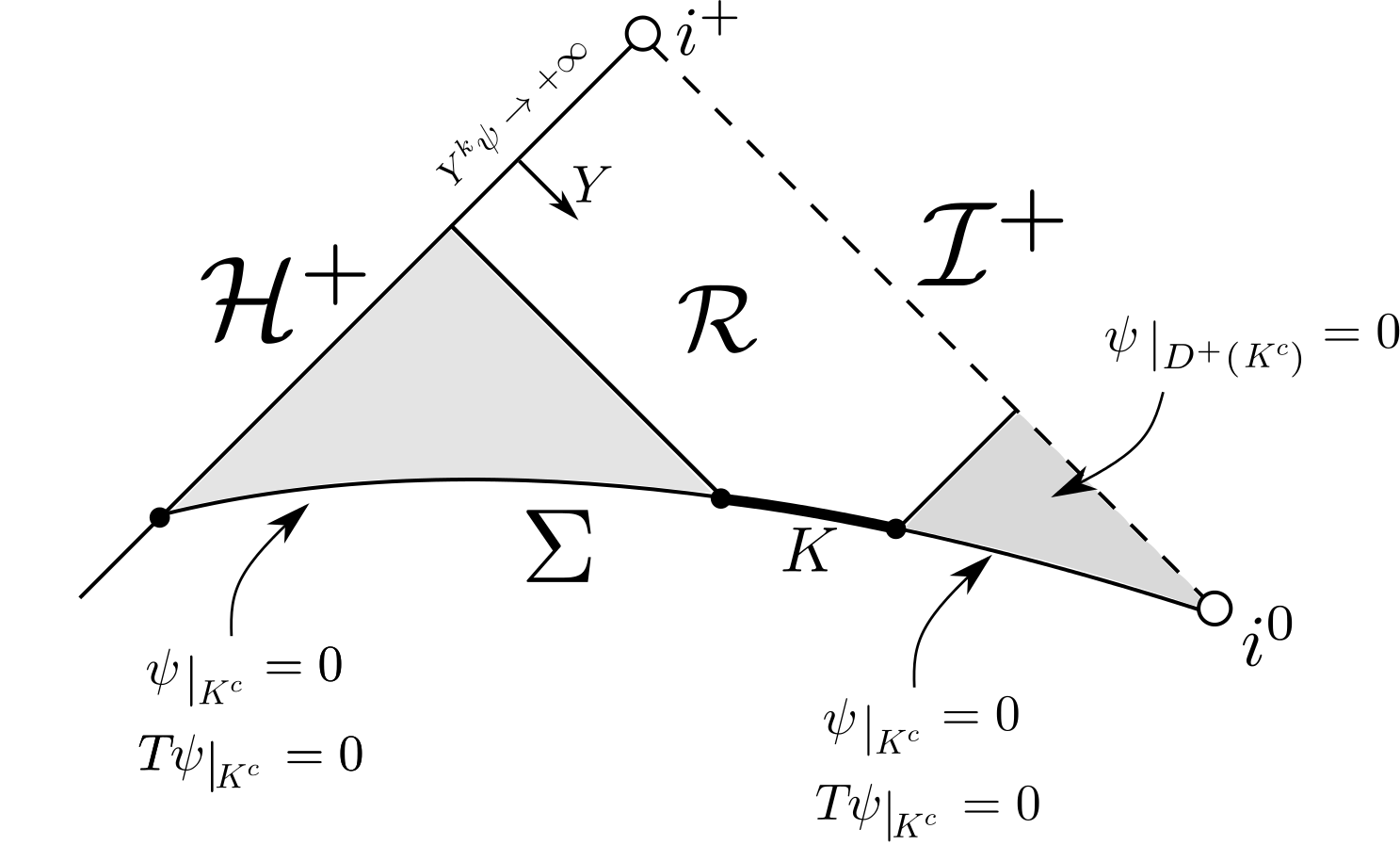}
	\label{fig:ekerr2cdpi2}
\end{figure}
In particular, $K$ may be chosen to be arbitrarily small and arbitrarily far away from the horizon. Note that the instabilities of Theorem \ref{theo1} concern one order higher derivatives compared to the previous results (see \eqref{pr}) for initial data whose support includes $\hh$. On the other  hand, note that our previous stability results and the vanishing of the conserved quantity $H_{0}[\psi]=0$ (see \cite{aretakis4} or Section \ref{sec:InstabilitiesForPsiAlongTheHorizon} below) imply that for  solutions $\psi$ as in Theorem \ref{theo1} we have
\[|\psi|\rightarrow 0,\   \ \ \ \ \int_{S_{v}}Y\psi\rightarrow 0\]
as $v\rightarrow+\infty$.  Note that the above, in particular, preclude the first non-decay inequality of \eqref{pr}.

We  remark that our argument can be readily adapted for the easier case of  extremal Reissner--Nordstr\"{o}m backgrounds. We finally mention an independent work of Bizon and Friedrich \cite{bizon2012} which provides a nice heuristic analysis of the problem based on the existence of conformal symmetries and heuristic analysis on $\mathcal{I}^{+}$.

\section{Geometric setup}
\label{sec:GeometricSetup}

\subsection{The Kerr metric}
\label{sec:TheKerrMetric}

The  Kerr metric with respect to the \textit{Boyer-Lindquist coordinates} $(t, r,\theta, \phi)$ is given by
\begin{equation*}
g=g_{tt}dt^{2}+g_{rr}dr^{2}+g_{\phi\phi}d\phi^{2}+g_{\theta\theta}d\theta^{2}+2g_{t\phi}dtd\phi,
\end{equation*}
where
\begin{equation*}
\begin{split}
&g_{tt}=-\frac{\Delta-a^{2}\sin^{2}\theta}{\rho^{2}}, \ \ \  g_{rr}=\frac{\rho^{2}}{\Delta},\  \ \   g_{t\phi}=-\frac{2Mar\sin^{2}\theta}{\rho^{2}},\\  
&\ \ \ \  \ \   g_{\phi\phi}=\frac{(r^{2}+a^{2})^{2}-a^{2}\Delta\sin^{2}\theta}{\rho^{2}}\sin^{2}\theta, \ \ \   g_{\theta\theta}=\rho^{2}
\end{split}
\end{equation*}
with
\begin{equation}
\Delta=r^{2}-2Mr+a^{2}, \ \ \ \ \ \rho^{2}=r^{2}+a^{2}\cos^{2}\theta.
\label{basic}
\end{equation}
Schwarzschild corresponds to the case $a=0$, subextremal Kerr to $|a|<M$ and extremal Kerr to $|a|=M$.

Note that the metric component $g_{rr}$ is singular precisely at the points where $\Delta=0$. To overcome this coordinate singularity  we introduce the following functions $r^{*}(r), \phi^{*}(\phi, r)$ and $v(t, r^{*})$ such that
\begin{equation*}
r^{*}=\int\frac{r^{2}+a^{2}}{\Delta}, \ \ \ \phi^{*}=\phi +\int\frac{a}{\Delta}, \ \ \ v=t+r^{*}
\end{equation*}
In the \textit{ingoing Eddington-Finkelstein  coordinates} $(v,r,\theta, \phi^{*})$  the metric takes the form
\begin{equation*}
g=g_{vv}dv^{2}+g_{rr}dr^{2}+g_{\phi^{*}\phi^{*}}(d\phi^{*})^{2}+g_{\theta\theta}d\theta^{2}+2g_{vr}dvdr+2g_{v\phi^{*}}dvd\phi^{*}+2g_{r\phi^{*}}drd\phi^{*},
\end{equation*}
where
\begin{equation}
\begin{split}
&g_{vv}=-\left(1-\frac{2Mr}{\rho^{2}}\right), \ \ \   g_{rr}=0,\  \ \  g_{\phi^{*}\phi^{*}}=g_{\phi\phi}, \  \ \  g_{\theta\theta}=\rho^{2}\\
\\& \ \ \ \  \ \  g_{vr}=1, \ \ \   g_{v\phi^{*}}=-\frac{2Mar\sin^{2}\theta}{\rho^{2}}, \  \ \  g_{r\phi^{*}}=-a\sin^{2}\theta.
\end{split}
\label{edi}
\end{equation}
For convenience we denote
\[T=\partial_{v},\ \ \ Y=\partial_{r}, \ \ \ \Phi=\partial_{\phi^{*}}.\]
For completeness, we include the computation for the inverse of the metric in $(v,r,\theta,\phi^{*})$ coordinates:
\begin{equation*}
\begin{split}
&g^{vv}=\frac{a^{2}\sin^{2}\theta}{\rho^{2}}, \ \ \   g^{rr}=\frac{\Delta}{\rho^{2}},\ \ \   g^{\phi^{*}\phi^{*}}=\frac{1}{\rho^{2}\sin^{2}\theta}, \  \ \  g^{\theta\theta}=\frac{1}{\rho^{2}}\\
\\& \ \ \ \ \ \ \  \ \   g^{vr}=\frac{r^{2}+a^{2}}{\rho^{2}} ,\  \ \   g^{v\phi^{*}}=\frac{a}{\rho^{2}}, \ \ \   g^{r\phi^{*}}=\frac{a}{\rho^{2}}.
\end{split}
\end{equation*}

Clearly, the metric expression \eqref{edi} does not break down at the points where $\Delta=0$, and in fact, it turns out  that this expression is regular even for $r<0$. On the other hand, the curvature would blow up at $\rho^{2}=0$, i.e.~the equatorial points of $r=0$.

Let $(\theta,\phi^{*})$ represent standard global\footnote{modulo the standard degeneration at $\theta=0,\pi$...} spherical coordinates on the sphere $\mathbb{S}^{2}$ and $S_{\text{eq}}$ denote the equator, i.e.~$S_{\text{eq}}=\mathbb{S}^{2}\cap\left\{\theta=\pi/2\right\}$. Let also $(v,r)$ be a global coordinate system on $\mathbb{R}\times\mathbb{R}$. We define the differential structure of the manifold $\mathcal{N}$ to be
\begin{equation*}
\mathcal{N}=\Bigg\{\big(v,r,\theta,\phi^{*}\big)\in\bigg\{\Big\{\mathbb{R}\times\mathbb{R}\times\mathbb{S}^{2}\Big\}\setminus\Big\{\mathbb{R}\times\left\{0\right\}\times S_{\text{eq}}\Big\}\bigg\}\Bigg\}.
\end{equation*}

From now on, we restrict our attention to extremal Kerr $|a|=M$, unless otherwise stated.  The event horizon $\hh$ is defined by
\begin{equation*}
\mathcal{H}^{+}=\mathcal{N}\cap\left\{r=M\right\}.
\end{equation*}
The \textit{black hole region} $\mathcal{N}_{\text{BH}}$ corresponds to
\begin{equation*}
\mathcal{N}_{\text{BH}}=\mathcal{N}\cap \left\{r<M\right\};
\end{equation*}
it is characterised by the fact that observers in the black hole region cannot send signals to observers located at points with $r>M$. The exterior region $\mathcal{D}$ given by
\begin{equation*}
\mathcal{D}=\mathcal{N}\cap \left\{r>M\right\}
\end{equation*}
is the so-called \textit{domain of outer communications}. This is precisely the region covered by  the Boyer-Lindquist coordinates.  Note that we shall be interested in studying the solutions to the wave equation in the region $\mathcal{D}\cup\hh$.

  The Penrose diagram of the extended region $\mathcal{N}$  is
 \begin{figure}[H]
	\centering
		\includegraphics[scale=0.135]{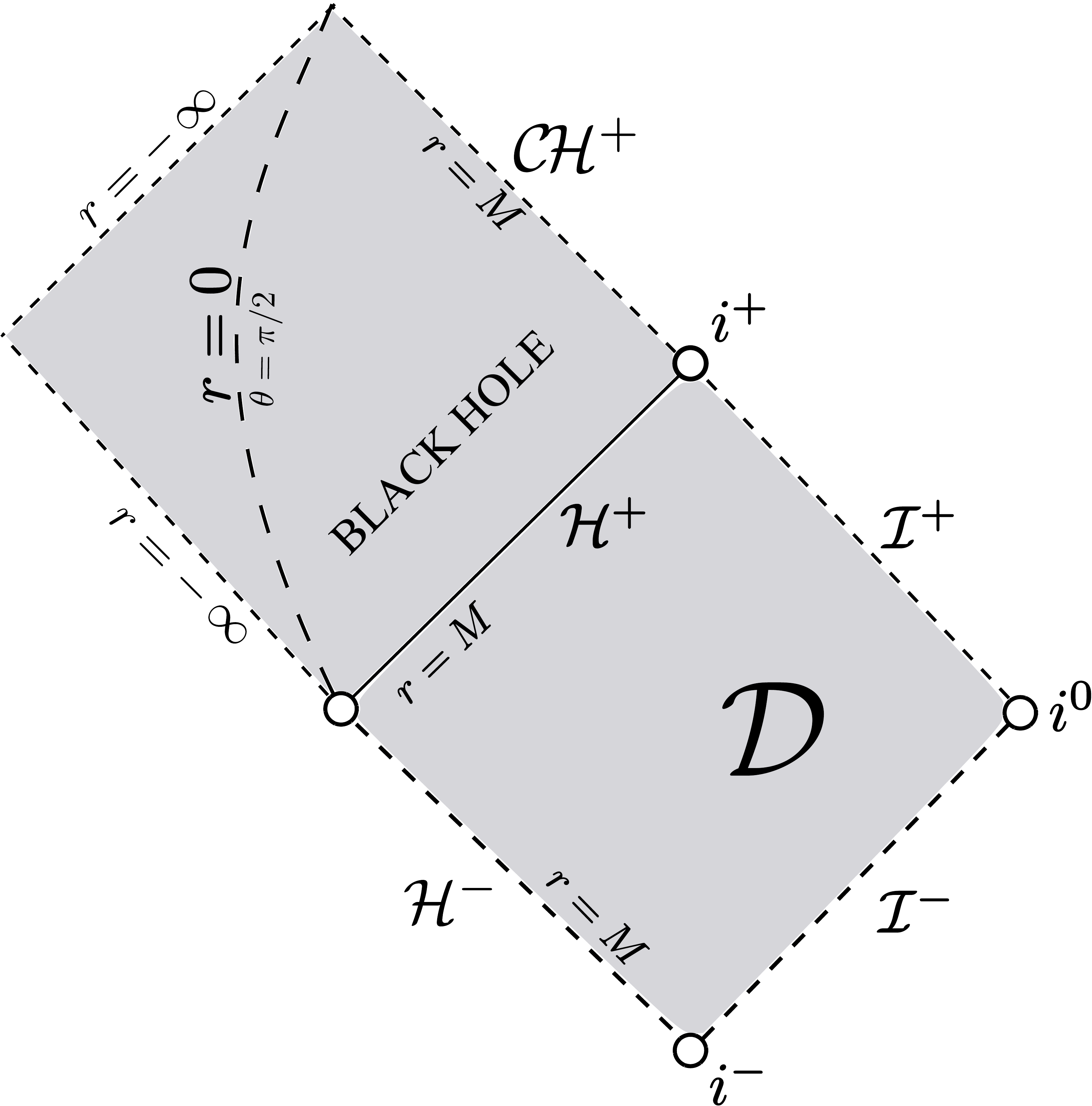}
	\label{fig:ekerr2}
\end{figure}

\subsection{The initial hypersurface $\Sigma$}
\label{sec:TheInitialHypersurfaceSigma}

Let $\Sigma$ be an axisymmetric hypersurface which crosses the horizon $\hh=\left\{r=M\right\}$, terminates at $i^{0}$, is everywhere transversal to the Killing field $T$ and also such that 
\begin{equation}
M^{2}\sin^{2}\theta-2(r^{2}+M^{2})h+\Delta h^{2}<0,
\label{condition}
\end{equation}
where $h(r,\theta)=-\frac{g(Y,n_{\Sigma})}{g(T,n_{\Sigma})}$ and $n_{\Sigma}$ denotes the future unit normal to $\Sigma$. Note that $g(T,n_{\Sigma})=g(T+\omega\Phi,n_{\Sigma})<0$ and also $g(Y,n_{\Sigma})=-g(-Y,n_{\Sigma})>0$. Note  that the above expression is negative for a neighborhood of $\hh$ and for large $r$ (for which $h\rightarrow 1$). We can hence choose $\Sigma$ such that the left hand side of \eqref{condition} is indeed negative everywhere on $\Sigma$ (note that this is  a very mild assumption on $\Sigma$). For example, if $\Sigma=\left\{t=0\right\}$ then $h=\frac{r^{2}+M^{2}}{(r-M)^{2}}$ from which it immediately follows that \eqref{condition} is satisfied.   Let now $M<R_{1}<R_{2}$.
Consider the following pieces
\[\Sigma_{1}=\Sigma\cap\left\{M\leq r\leq R_{1}\right\}, \ \ \Sigma_{2}=\Sigma\cap\left\{R_{1}\leq r\leq R_{2}\right\}, \ \ \Sigma_{3}=\Sigma\cap\left\{r\geq R_{2}\right\}\]
and the bit inside the black hole region
\[\Sigma_{0}=\Sigma\cap\left\{\frac{3}{4}M\leq r\leq M\right\}.\]
We also define
\[\r=D^{+}(\Sigma_{1}\cup\Sigma_{2}\cup\Sigma_{3})\cup\hh.\]
\begin{figure}[H]
	\centering
		\includegraphics[scale=0.16]{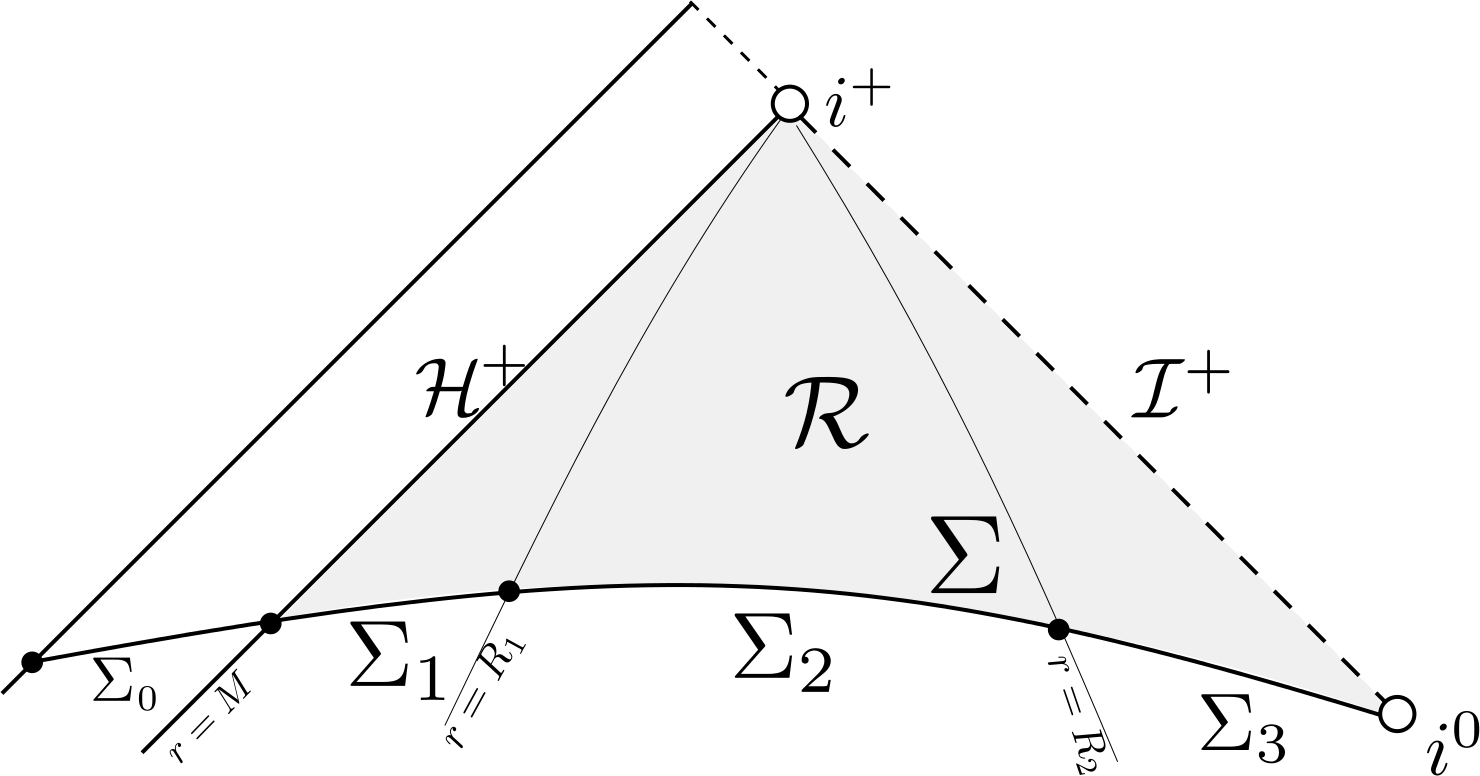}
	\label{fig:csind1}
\end{figure}

\subsection{The well-posedness of the wave equation}
\label{sec:TheWellPosednessOfTheWaveEquation}

Recall that $T$ is smooth in $D(\Sigma)$ and also everywhere transversal to $\Sigma$. Hence, given smooth initial data $(f_{1},f_{2})\in C^{\infty}(\Sigma)\times C^{\infty}(\Sigma)$, then there exists a unique solution to the Cauchy problem
\begin{equation*}
\begin{split}
&\Box_{g}\psi=0,\\
\left.\psi\right|_{\Sigma}=f_{1}&, \ \ \ \left.T\psi\right|_{\Sigma}=f_{2},
\end{split}
\end{equation*}
which is smooth in the interior of  $D^{+}(\Sigma)$, and so smooth in $\mathcal{R}\cup\hh$.

\section{Construction of the time integral $\phi$}
\label{sec:ConstructionOfTheTimeIntegralPhi}

On Schwarzschild backgrounds, Wald \cite{drimos} showed that 
given a solution to the wave equation with   compactly supported initial data on $t=0$, one can construct its time integral (see also the discussion at Section 3.2 in \cite{lecturesMD})

In this section we will reverse the logic. We will construct a (specific) solution $\psi$ to the wave equation by first constructing what will be its time integral $\phi$.  
We first prescribe initial data for $\phi$ on $\Sigma$. Let

\begin{equation}
\begin{split}
&\left.\phi\right|_{\Sigma_{0}\cup\Sigma_{1}}=1, \ \ \ \  \ \left.\phi\right|_{\Sigma_{3}}=0,\\
&\left.T\phi\right|_{\Sigma_{0}\cup\Sigma_{1}}=0, \ \ \ \left.T\phi\right|_{\Sigma_{3}}=0,\\
\end{split}\label{id1}
\end{equation}
and consider axisymmetric functions
\begin{equation*}
\begin{split}
\left.\phi\right|_{\Sigma_{2}}, \ \ \ \left.T\phi\right|_{\Sigma_{2}}\\
\end{split}
\end{equation*}
such that 
\begin{equation*}
\begin{split}
\left.\phi\right|_{\Sigma}, \ \ \ \left.T\phi\right|_{\Sigma}\\
\end{split}
\end{equation*}
are smooth and axisymmetric functions on $\Sigma$.
\begin{figure}[H]
	\centering
		\includegraphics[scale=0.17]{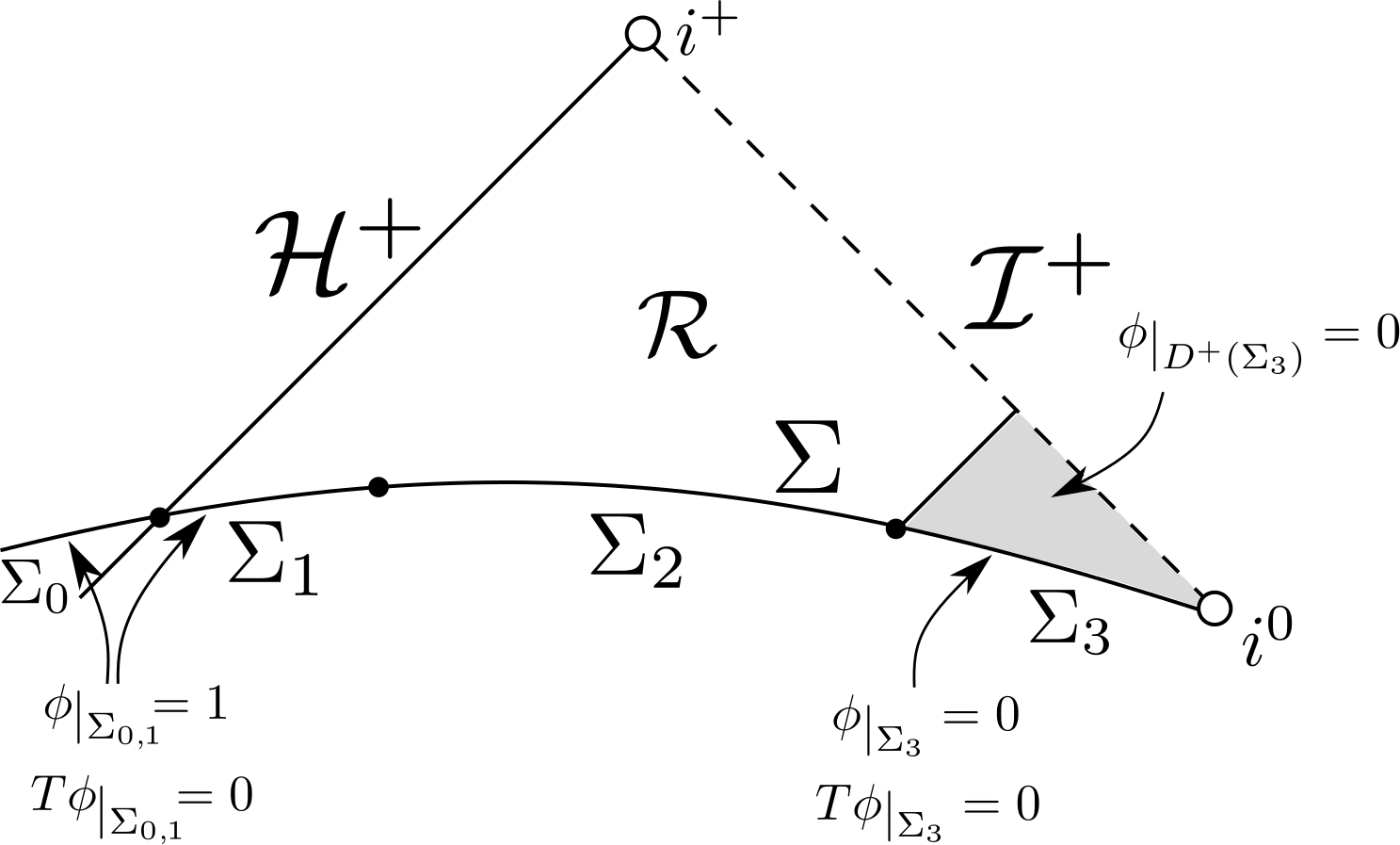}
	\label{fig:csind2}
\end{figure}
Then, there exists a unique solution $\phi$ to 
\begin{equation}
\Box_{g}\phi=0
\label{we1}
\end{equation}
which is smooth in $\r$. By virtue of the axisymmetry of the initial data of  $\phi$, the solution $\phi$ is axisymmetric.

\section{Construction of  $\psi$ with compactly supported initial data}
\label{sec:ConstructionOfPsiWithCompactlySupportedInitialData}

We define \[\boxed{\psi=T\phi\in C^{\infty}(D^{+}(\Sigma))\subset C^{\infty}(\r).}\] 
Since $T$ is Killing, we have $[\Box_{g},T]=0$ and hence 
\[\Box_{g}\psi=0\]in $D^{+}(\Sigma)$.
\begin{myp}
The data of $\psi$ on the hypersurface $\Sigma$ are compactly supported (and supported away from $\hh$). 
\label{p11}
\end{myp}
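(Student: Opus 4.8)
The plan is to compute the two pieces of Cauchy data of $\psi$ on $\Sigma$, namely $\psi|_{\Sigma}$ and $T\psi|_{\Sigma}$, and to show that both are supported in $\Sigma_{2}\subseteq\{R_{1}\le r\le R_{2}\}$, which is compact and satisfies $r\ge R_{1}>M$, hence is disjoint from $\hh=\{r=M\}$. Since $\psi=T\phi$, these two pieces are $\psi|_{\Sigma}=T\phi|_{\Sigma}$ and $T\psi|_{\Sigma}=T^{2}\phi|_{\Sigma}$. The first is immediate: by the prescription \eqref{id1} we have $T\phi|_{\Sigma_{0}\cup\Sigma_{1}}=0$ and $T\phi|_{\Sigma_{3}}=0$, so $\psi|_{\Sigma}$ already vanishes outside $\Sigma_{2}$. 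The entire content of the proposition is therefore the vanishing of $T^{2}\phi|_{\Sigma}$ on $\Sigma_{0}\cup\Sigma_{1}$ and on $\Sigma_{3}$.

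First I would record the elementary but crucial fact that every constant function $\phi_{0}$ solves $\Box_{g}\phi_{0}=0$. On $\Sigma_{0}\cup\Sigma_{1}$ the Cauchy data $(\phi|_{\Sigma},T\phi|_{\Sigma})=(1,0)$ of our $\phi$ coincides with that of $\phi_{0}\equiv 1$, while on $\Sigma_{3}$ the data $(0,0)$ coincides with that of $\phi_{0}\equiv 0$. The strategy is then to argue that $T^{2}\phi|_{\Sigma}$ is determined, pointwise and through a single solution-independent differential operator, from the Cauchy data alone; comparing with the constant solutions will then force $T^{2}\phi|_{\Sigma}=0$ wherever the data agrees with that of a constant.

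To make the determination of $T^{2}\phi|_{\Sigma}$ precise I would exploit that $\Sigma$ is spacelike — which is exactly what condition \eqref{condition} guarantees (and what holds for $\{t=0\}$) — hence non-characteristic for $\Box_{g}$. Introducing a defining function $s$ with $\Sigma=\{s=0\}$ and $ds$ timelike, the coefficient of $\partial_{s}^{2}$ in $\Box_{g}$ is $g(ds,ds)\neq 0$, so $\Box_{g}\phi=0$ can be solved algebraically for $\partial_{s}^{2}\phi$ in terms of $\phi$, $\partial_{s}\phi$ and their tangential derivatives. Restricting to $\Sigma$ expresses $\partial_{s}^{2}\phi|_{\Sigma}$ through $(\phi|_{\Sigma},\partial_{s}\phi|_{\Sigma})$ and their tangential derivatives; since $T$ is transversal to $\Sigma$ it decomposes as a nonzero multiple of $\partial_{s}$ plus a tangential field, so $\partial_{s}\phi|_{\Sigma}$ is itself a combination of $T\phi|_{\Sigma}$ and tangential derivatives of $\phi|_{\Sigma}$, and likewise $T^{2}\phi|_{\Sigma}$ is obtained from $\partial_{s}^{2}\phi|_{\Sigma}$, $\partial_{s}\phi|_{\Sigma}$ and tangential data. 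One thus arrives at $T^{2}\phi|_{\Sigma}=F\big[\phi|_{\Sigma},T\phi|_{\Sigma}\big]$ for a fixed operator $F$. Feeding in the two constant solutions gives $T^{2}\phi|_{\Sigma}=0$ on the interiors of $\Sigma_{0}\cup\Sigma_{1}$ and of $\Sigma_{3}$, and hence everywhere on them by continuity, which together with the first paragraph completes the proof.

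I expect the main obstacle to be precisely this non-characteristic reduction, because the naive attempt to solve $\Box_{g}\phi=0$ directly for $\partial_{v}^{2}\phi$ fails: the coefficient $g^{vv}=M^{2}\sin^{2}\theta/\rho^{2}$ of $\partial_{v}^{2}\phi$ degenerates at the poles $\theta=0,\pi$ (and would vanish identically in the Schwarzschild/Reissner--Nordstr\"{o}m limit). The spacelike defining function $s$ circumvents this, since $g(ds,ds)$ is bounded away from zero uniformly in $\theta$ by spacelikeness; the degeneracy of $g^{vv}$ is then harmless, being absorbed into the data-determined tangential and first-order terms. I would note in passing that a pure finite-speed-of-propagation argument is delicate here because $T=\partial_{v}$ is spacelike in the ergoregion near and inside $\hh$, so the purely local jet computation on the spacelike surface $\Sigma$ is the cleaner route.
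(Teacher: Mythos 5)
Your proof is correct, and it reaches the paper's conclusion by a more abstract route than the paper's own. The paper argues by explicit computation: it writes $\rho^{2}\Box_{g}\phi$ in the frame $T$, $\partial_{\rho}=hT+Y$, $\Phi$, $\partial_{\theta}$, restricts to $\Sigma_{0}\cup\Sigma_{1}$, where the data \eqref{id1} are constant so that every term containing a tangential derivative drops out, and is left with $\big(M^{2}\sin^{2}\theta-2(r^{2}+M^{2})h+\Delta h^{2}\big)\,TT\phi=0$; condition \eqref{condition} makes the coefficient nonzero, hence $TT\phi=0$ there, while $\Sigma_{3}$ is handled separately by finite speed of propagation. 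Your argument packages exactly the same mechanism abstractly: the coefficient above is precisely $\rho^{2}\,g^{-1}(\xi,\xi)$ for the conormal $\xi=dv-h\,dr$ of $\Sigma$, so \eqref{condition} \emph{is} your spacelikeness/non-characteristic hypothesis, and your solution-independent operator $F$, evaluated on the data $(1,0)$ and $(0,0)$ of the constant solutions, is what the paper's computation evaluates by hand. What your version buys: no coordinate computation, a uniform treatment of $\Sigma_{0}\cup\Sigma_{1}$ and $\Sigma_{3}$, and a clear isolation of the one geometric fact that matters. What the paper's version buys: the explicit restricted wave operator, which is not wasted effort, since the same expansion is reused in Section \ref{sec:InstabilitiesForPsiAlongTheHorizon} (e.g.~\eqref{malista}, \eqref{int2}) to derive the instabilities; it also makes transparent \emph{why} \eqref{condition} was imposed in the first place.

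One minor correction: your closing caution about finite speed of propagation is misplaced. Uniqueness in $D^{+}(\Sigma_{3})$ (and one-sided differentiation plus continuity) requires only that $\Sigma$ be spacelike and $T$ transversal pointing to the future side; the causal character of $T$ in the ergoregion is irrelevant, and indeed the paper uses precisely this argument on $\Sigma_{3}$. In fact the same comparison trick in ``domain of dependence'' form --- $\phi-1$ has vanishing data on $\Sigma_{0}\cup\Sigma_{1}$, hence $\phi\equiv 1$ on $D^{+}(\Sigma_{0}\cup\Sigma_{1})$, hence $TT\phi=0$ on $\Sigma_{0}\cup\Sigma_{1}$ --- would also have worked near the horizon, giving yet a third proof; your jet-level version is the infinitesimal form of it.
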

\begin{figure}[H]
	\centering
		\includegraphics[scale=0.17]{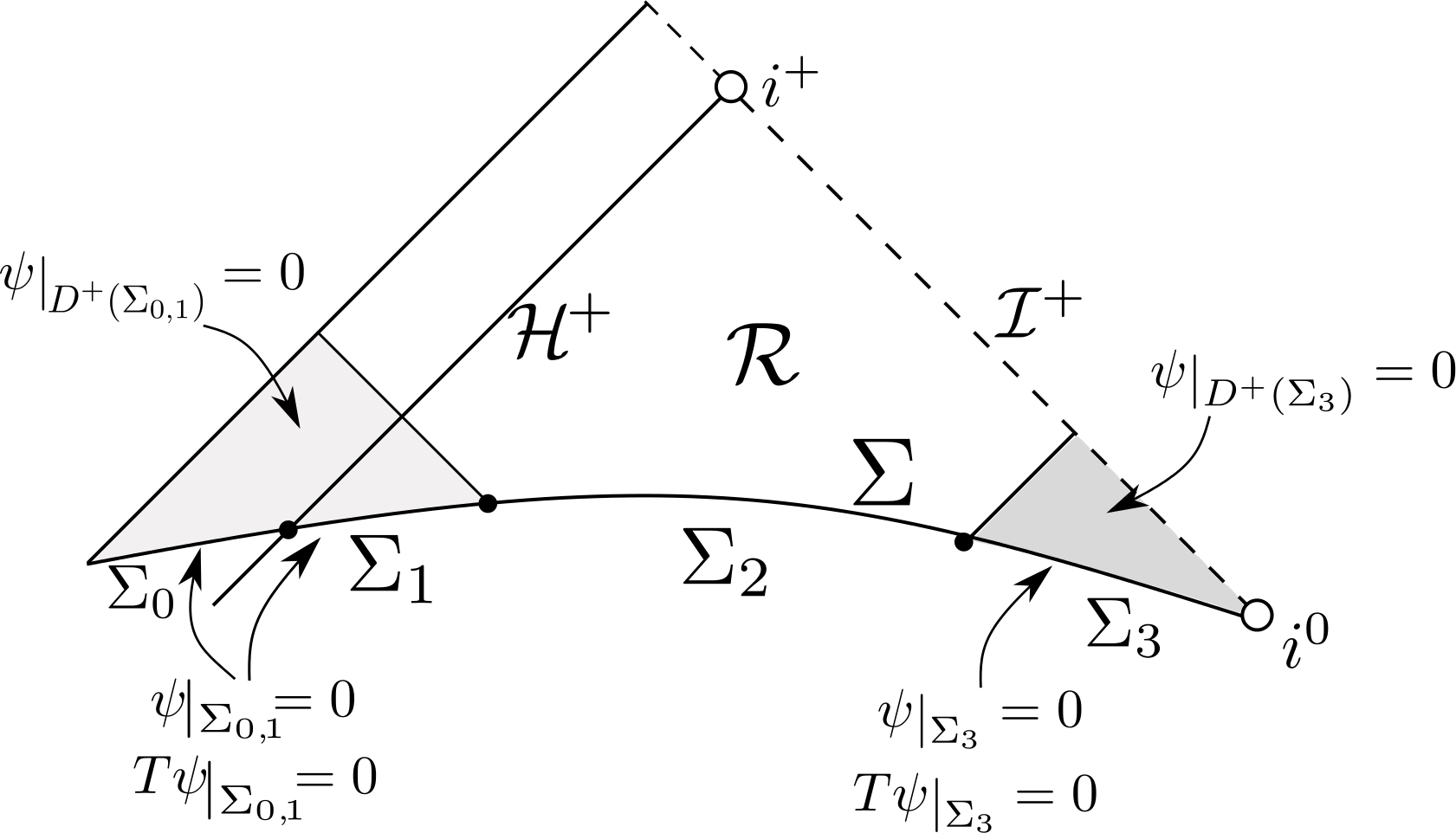}
	\label{fig:csind32}
\end{figure}
\begin{proof}
First note that since $[T,\Phi]=0$ the initial data of $\psi$ on $\Sigma$ are axisymmetric, and thus $\Phi\psi=0$ in $D^{+}(\Sigma)$. 
Clearly,
\[\left.\psi\right|_{\Sigma_{1}}=\left.T\phi\right|_{\Sigma_{1}}=0, \ \ \ \  \ \left.\psi\right|_{\Sigma_{3}}= \left.T\phi\right|_{\Sigma_{3}}=0. \]
We will show that we also have
\[\left.T\psi\right|_{\Sigma_{1}}=\left.TT\phi\right|_{\Sigma_{1}}=0, \ \ \ \  \ \left.T\psi\right|_{\Sigma_{3}}= \left.TT\phi\right|_{\Sigma_{3}}=0. \]
We start with $\Sigma_{1}$, where $\phi$ satisfies $\phi=1$, $T\phi=0$ and $\Box_{g}\phi=0$. The wave operator in $(v,r,\theta, \phi^{*})$ coordinates is (recall that $\partial_{v}=T$, $\partial_{r}=Y$ and $\partial_{\phi^{*}}=\Phi$):
\begin{equation*}
\begin{split}
\Box_{g}\phi=&\frac{a^{2}}{\rho^{2}}\sin^{2}\theta\left(TT\phi\right)+\frac{2(r^{2}+M^{2})}{\rho^{2}}\left(TY\phi\right)+\frac{\Delta}{\rho^{2}}(YY\phi)\\&+\frac{2a^{2}}{\rho^{2}}(T\Phi\phi)+\frac{2a}{\rho^{2}}(Y\Phi\phi)+\frac{2r}{\rho^{2}}(T\phi)+\frac{\Delta'}{\rho^{2}}(Y\phi) +\frac{1}{\rho^{2}}\lapp_{(\theta,\phi^{*})}\phi,
\end{split}
\end{equation*}
where $\lapp_{(\theta,\phi^{*})}\psi=\frac{1}{\sin\theta}\left(\partial_{\theta}\left[\sin\theta\cdot\partial_{\theta}\psi\right]\right)+\frac{1}{\sin^{2}\theta}\partial_{\phi^{*}}\partial_{\phi^{*}}\psi$ denotes the standard Laplacian on $\mathbb{S}^{2}$ with respect to $(\theta,\phi^{*})$ and $\rho^{2}=r^{2}+a^{2}\cos^{2}\theta$ and $\Delta=(r-M)^{2}$ (recall also that we consider $|a|=M$).

Furthermore, we consider the coordinate system $(r,\theta,\phi^{*})$ for $\Sigma$ (and, in fact, for $\Sigma_{\tau}$) and let $\partial_{\rho}$ be the coordinate vector field tangential to $\Sigma$ such that $\partial_{\rho}r=1$. Then,
\[\partial_{\rho}=h(r,\theta)T+Y,\]
where $h$ is a (strictly) positive function that depends only on $\Sigma$ given by
\[h=-\frac{g(Y,n_{\Sigma})}{g(T,n_{\Sigma})}.\]
\begin{figure}[H]
	\centering
		\includegraphics[scale=0.17]{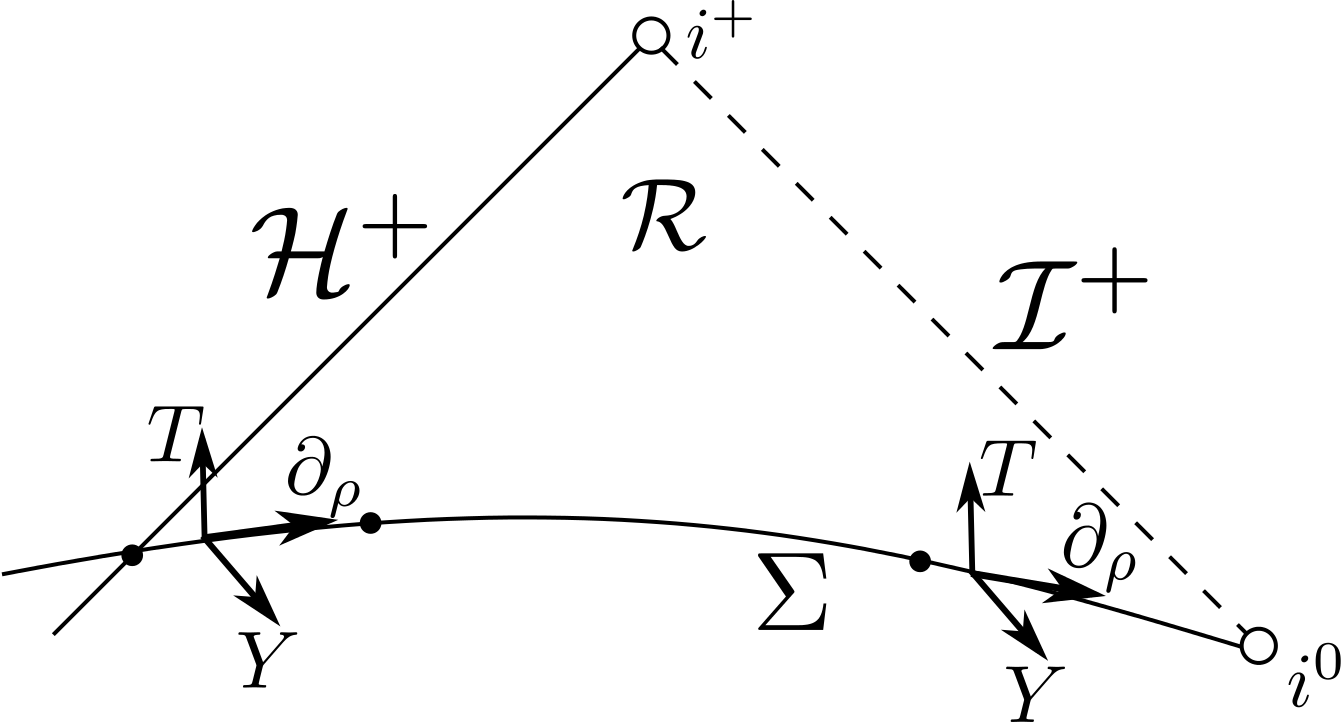}
	\label{fig:csind4}
\end{figure} Hence, the wave operator becomes
\begin{equation*}
\begin{split}
\rho^{2}\cdot (\Box_{g}\phi)=&M^{2}\sin^{2}\theta\left(TT\phi\right)+2(r^{2}+M^{2})\left(\partial_{\rho}T\phi\right)-2(r^{2}+M^{2})h\cdot\left(TT\phi\right)\\
&+\Delta\cdot\Big(\partial_{\rho}\partial_{\rho}\phi+h^{2}TT\phi-h'T\phi-2h\partial_{\rho}T\phi\Big)\\
&+2M^{2}(T\Phi\phi)+2M(\partial_{\rho}\Phi\phi)-2MhT\Phi\phi+2r(T\phi)+\Delta'(\partial_{\rho}\phi)-\Delta'hT\phi +\lapp\phi,
\end{split}
\end{equation*}
By restricting the above on $\Sigma_{0}\cup\Sigma_{1}$ and using \eqref{id1} (and that $\partial_{\rho}$ is tangential to $\Sigma$) we obtain
\[ \big(M^{2}\sin^{2}\theta-2(r^{2}+M^{2})h+\Delta h^{2}\big)\cdot(TT\phi)=0. \]
According to property \eqref{condition} of $\Sigma$, the coefficient of $TT\phi$  above is always negative. Therefore, $T\psi=TT\phi=0$ on $\Sigma_{0}\cup\Sigma_{1}$. The result on $\Sigma_{3}$ follows by the finite speed propagation property. 
\end{proof}

\bigskip

\section{Instabilities for $\psi$ along the horizon}
\label{sec:InstabilitiesForPsiAlongTheHorizon}

First note that since the initial data of $\psi$ are smooth, axisymmetric and compactly supported, in view of our previous stability results (see Theorem 5 of Section 3 in \cite{aretakis3}), we have that \[|\psi|\rightarrow 0,\ \ \ \ \ |T\psi|\rightarrow 0,\ \ \ \ \ |TT\psi|\rightarrow 0\]
along $\hh$ as $v\rightarrow +\infty$. Since the conserved quantities $H_{0}[\psi]=H_{0}[T\psi]=0$ we further obtain (see \cite{aretakis4})
\[\int_{S_{v}}Y\psi\rightarrow 0,\ \ \ \ \ \ \int_{S_{v}}YT\psi\rightarrow 0\]
 where $S_{v}$ are the $(\theta,\phi^{*})$ sections of $\hh$. Nonetheless,  we have the following instability results:
\begin{myp}
Let $\Sigma$ be a Cauchy hypersurface crossing $\hh$ as defined in Section \ref{sec:GeometricSetup} and $K=\Sigma\cap\left\{R_{1}\leq r \leq R_{2}\right\}$, where $M<R_{1}<R_{2}$. Then,  there exists a solution $\psi$ to the wave equation whose initial data are supported only on $K$ such that  
\[\sup_{S_{v}}|Y^{2}\psi|\nrightarrow 0, \ \ \ \ \ \sup_{S_{v}}|Y^{k}\psi|\rightarrow +\infty,\ k\geq 3, \]
along the horizon as the advanced time parameter $v$ tends to infinity.
\label{p21}
\end{myp}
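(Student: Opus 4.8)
The plan is to exploit the defining relation $\psi=T\phi$ together with $[T,Y]=0$, which gives $Y^{k}\psi=Y^{k}T\phi=T(Y^{k}\phi)=\partial_{v}(Y^{k}\phi)$ along $\hh$. Thus every transversal derivative of $\psi$ is the advanced-time derivative of the corresponding transversal derivative of $\phi$, and the whole statement reduces to the late-time behaviour of $Y^{k}\phi$ on $\hh$. The decisive structural point is that, although $H_{0}[\psi]=0$ by construction, the auxiliary field $\phi$ carries a \emph{non-vanishing} conserved quantity on the horizon, and it is precisely this non-vanishing constant that drives the Aretakis hierarchy one derivative lower for $\psi$.

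First I would make the conservation law explicit. Restricting $\Box_{g}\phi=0$ to $\{r=M\}$ (where $\Delta=\Delta'=0$, $\Delta''=2$) and using axisymmetry ($\Phi\phi=0$) reduces the wave operator to
\[
M^{2}\sin^{2}\theta\,(TT\phi)+4M^{2}\,(TY\phi)+2M\,(T\phi)+\lapp\phi=0 \quad\text{on }\hh .
\]
Integrating over $S_{v}$ kills the angular Laplacian and shows that $H_{0}[\phi]=\int_{S_{v}}\big(M^{2}\sin^{2}\theta\,T\phi+4M^{2}\,Y\phi+2M\,\phi\big)$ is independent of $v$. Evaluating it on $\Sigma\cap\hh$: there $\phi=1$ and $T\phi=0$ by the data \eqref{id1} on $\Sigma_{0}\cup\Sigma_{1}$, and since $\phi$ is constant along $\Sigma$ near the horizon its tangential derivative vanishes, so $Y\phi=\partial_{\rho}\phi-hT\phi=0$ there as well. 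Hence $H_{0}[\phi]=2M\cdot 4\pi=8\pi M\neq 0$. The same computation for $\psi=T\phi$ returns $H_{0}[\psi]=0$, consistent with the text.

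Next I would combine this with the known decay. Since $\phi$ has compactly supported data (it vanishes on $\Sigma_{3}$), the stability results give $|\phi|\to 0$ and $|T\phi|\to 0$ on $\hh$, so conservation of $H_{0}[\phi]$ forces $\int_{S_{v}}Y\phi\to H_{0}[\phi]/(4M^{2})=2\pi/M\neq 0$, the Aretakis non-decay for $\phi$. To climb the ladder I commute $\Box_{g}\phi=0$ with $Y$, restrict to $\{r=M\}$ and integrate over $S_{v}$; the crucial new term is $\partial_{r}(\Delta'Y\phi)|_{r=M}=\Delta''\,Y\phi=2Y\phi$, i.e.\ the curvature of the extremal horizon. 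This produces a \emph{non-closed} law $\tfrac{d}{dv}H_{1}[\phi]=-2\int_{S_{v}}Y\phi$ for a level-one quantity $H_{1}[\phi]$ whose remaining terms all decay. I would then compute $\int_{S_{v}}Y^{2}\psi$ \emph{directly}, without differentiating any asymptotic expansion:
\[
\int_{S_{v}}Y^{2}\psi=\partial_{v}\!\int_{S_{v}}Y^{2}\phi=\frac{1}{4M^{2}}\Big(\tfrac{d}{dv}H_{1}[\phi]-\partial_{v}(\text{decaying terms})\Big)\longrightarrow \frac{-2\cdot(2\pi/M)}{4M^{2}}\neq 0 ,
\]
a non-zero constant ($-\pi/M^{3}$ in these normalisations), which is the non-decay of $Y^{2}\psi$. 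Iterating the commutation $k$ times gives a triangular system whose top term satisfies $\int_{S_{v}}Y^{k}\phi\sim c_{k}v^{k-1}$ with $c_{k}\neq 0$, so that $\int_{S_{v}}Y^{k}\psi=\partial_{v}\int_{S_{v}}Y^{k}\phi\sim c_{k}(k-1)v^{k-2}\to+\infty$ for $k\geq 3$. Finally I pass to the pointwise statements of the Proposition via $\sup_{S_{v}}|Y^{k}\psi|\geq(4\pi)^{-1}\big|\int_{S_{v}}Y^{k}\psi\big|$.

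The main obstacle is not the algebra of the hierarchy but controlling its lower-order terms. At each level the restricted, sphere-integrated equation contains, besides the leading $\int_{S_{v}}Y^{k}\phi$, several companions (the $\sin^{2}\theta$-weighted second $T$-derivatives, the mixed $TY$-terms, and the undifferentiated field), and the conclusion requires that all of these remain genuinely subdominant so that the top term inherits the exact rate $v^{k-1}$ with non-zero coefficient. This is where the previously established uniform decay for $\psi$, $T\psi$, $TT\psi$ and for the base-level transversal derivatives must be fed in, and where one must justify that $\partial_{v}$ commutes with $\int_{S_{v}}$ and that the term-by-term limits are legitimate. I would carry this out through the conservation and amplification laws themselves, which deliver the rates directly, rather than by differentiating asymptotic expansions of $\phi$, since the latter would require control on the $v$-derivative of the remainders that the bare asymptotics do not supply.
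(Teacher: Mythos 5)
Your proposal is correct and is essentially the paper's own argument: the same construction $\psi=T\phi$ with data \eqref{id1}, the same conserved quantity $H_{0}[\phi]\neq 0$ forcing $\int_{S_{v}}Y\phi\nrightarrow 0$, and your non-closed law $\frac{d}{dv}H_{1}[\phi]=-2\int_{S_{v}}Y\phi$ is precisely the paper's identity \eqref{malista} (the $Y$-commuted wave equation restricted to $\hh$) integrated over $S_{v}$, followed by the same iteration of the commutation for $k\geq 3$. One caution: your label ``decaying terms'' is inaccurate as stated (e.g.\ the companion $6M\int_{S_{v}}Y\phi$ tends to a nonzero limit), but what your displayed formula actually requires---and what the paper supplies via the horizon-restricted equation for $\psi$ together with the decay of $\psi$, $T\psi$, $TT\psi$ and $H_{0}[\psi]=0$---is that the $\partial_{v}$ of each companion term (namely $\int_{S_{v}}\sin^{2}\theta\, YT\psi$, $\int_{S_{v}}Y\psi$, $\int_{S_{v}}\psi$) tends to zero, which is exactly the strategy you outline in your closing paragraph.
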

\begin{proof}
We choose the constants $R_{1},R_{2}$ of Section \ref{sec:GeometricSetup} such that $\Sigma_{2}=K$ and consider the above construction for $\phi,\psi$. Recall that $\psi=T\phi$ in $\r$ and that the initial data for $\phi$ are trivial in $\Sigma_{3}$ (and so $\phi$ is trivial in $D^{+}(\Sigma_{3}$)) and smooth and axisymmetric everywhere on $\Sigma$ (up to and including $\hh$). Hence, we can apply our previous stability results to deduce that 
\[|\phi|\rightarrow 0,\ \ \ |T\phi|\rightarrow 0, \ \ \ |TT\phi|\rightarrow 0\]
along $\hh$ as $v\rightarrow +\infty$.
Recall that $\phi$ is axisymmetric and its conserved quantity is given by
\[H_{0}[\phi]=4M\int_{S_{v}}Y\phi+2\int_{S_{v}}\phi+M\int_{S_{v}}\sin^{2}\theta(T\phi)=2\neq 0,\]
by simply computing it at $\Sigma\cap\hh$. Hence, in view of the above stability results we obtain
\[\int_{S_{v}}Y\phi\rightarrow \frac{1}{2M}\neq 0.\]
  By restricting $Y\big(\rho^{2}\Box_{g}\phi\big)=0$ on the horizon $\hh$ we obtain
\begin{equation}
\begin{split}
4M^{2}\cdot YYT\phi=-M^{2}\sin^{2}\theta (YTT\phi)-6M(YT\phi)-2T\phi-2Y\phi-\lapp Y\phi.
\end{split}
\label{malista}
\end{equation}
Furthermore,  by restricting $\Box_{g}\psi=0$ on $\hh$ and multiplying with $\sin^{2}\theta$ we obtain \[-M^{2}\int_{S_{v}}\sin^{2}\theta (YT\psi)\rightarrow 0\] as $v\rightarrow+\infty$.
Therefore, 
\begin{equation*}
\begin{split}
\int_{S_{v}}4M^{2}\cdot YY\psi=&-M^{2}\int_{S_{v}}\sin^{2}\theta (YT\psi)-6M\int_{S_{v}}Y\psi-2\int_{S_{v}}T\phi-2\int_{S_{v}}Y\phi\rightarrow -\frac{1}{M}
\end{split}
\end{equation*}
as $v\rightarrow+\infty$, which shows that $\sup_{S_{v}}|Y^{2}\psi|\geq q>0$ asymptotically as $v\rightarrow +\infty$.
%MORE ABOUT FIRST INTEGRAL!!!

We next look at $Y^{3}\psi$. By restricting $Y^{2}(\rho^{2}\Box_{g}\psi)=0$ on $\hh$ we obtain that there exist smooth functions $\lambda_{i}$ such that
\begin{equation}
\begin{split}
TY^{3}\psi=&\lambda_{0}(M,\theta)T^{2}Y^{2}\psi+\lambda_{1}(M)TY\psi+\lambda_{2}(M)TY^{2}\psi+\lambda_{3}(M)T\psi+\lambda_{4}(M)YT\phi\\&+\lapp Y^{2}\psi-\lambda_{5}(M)Y^{2}\psi,   \end{split}
\label{int2}
\end{equation}
where $\lambda_{5}(M)>0$. Note that the integrals \[\int_{S_{v}}Y\psi,\ \int_{S_{v}}YY\psi, \ \int_{S_{v}}\psi, \ \int_{S_{v}}Y\phi, \ \int_{S_{v}}\lambda_{0}(M,\theta)TY^{2}\psi\] are uniformly bounded in $v$. The boundedness of the latter integral, in particular, follows by multiplying \eqref{malista} (where $\phi$ is replaced by $\psi$) with $\lambda_{0}(M,\theta)$ and using that the integrals of the form
\begin{equation}
\int_{S_{v}}\lambda(\theta)\cdot Y\psi
\label{int1}
\end{equation}
can be bounded as follows: By the wave equation $\Box_{g}\psi=0$ restricted on $\hh$ (and multiplied by $\lambda(\theta)$) we obtain:
\begin{equation}
T\Big(\lambda(\theta)T\psi+\lambda(\theta)Y\psi+\lambda(\theta)\psi\Big)+\lambda(\theta)\cdot\lapp\psi=0
\label{eq1}
\end{equation}
along the horizon. 
The boundedness of \eqref{int1} follows by integrating \eqref{eq1} along $\hh$ and using that 
\[\int_{v}\int_{S_{v}}\lambda(\theta)\cdot \lapp\psi=\int_{v}\int_{S_{v}}\lapp\lambda(\theta)\cdot \psi\] and that $\psi=T\phi$ and the previous stability results. 
Note also that the integral $\int_{S_{v}}\lambda(\theta)\cdot \lapp Y\psi$ reduces to the previous integrals by applying Green's identity on $S_{v}$.

By integrating \eqref{int2} along the horizon $\hh$ and using  that  $\int_{S_{v}}YY\psi\rightarrow -\frac{1}{4M^{3}}\neq 0$ we obtain 
\[\sup_{S_{v}}|Y^{3}\psi|\rightarrow +\infty\]
as $v\rightarrow +\infty$ along $\hh$. We can similarly show that 
\[\sup_{S_{v}}|Y^{k}\psi|\rightarrow +\infty\]
as $v\rightarrow +\infty$ along $\hh$, $k\geq 3$.

\end{proof}

\section{Remarks}
\label{sec:Remarks}

\subsection{Genericity of instabilities}
\label{sec:GenericityOfInstabilities}

By linearity one can immediately see that the instabilities can be inferred from generic initial data when genericity can be understood as follows.

Let $K=\Sigma\cap\left\{R_{1}\leq r\leq R_{2}\right\}$, where $M<R_{1}<R_{2}$. Let $\xi$ be a general smooth (not necessarily axisymmetric) solution of \eqref{we} which does not develop instabilities along the horizon $\hh$, i.e.
\[|\xi|\rightarrow 0,\ \ \ \ |T\xi|\rightarrow 0,\ \ \ \ |Y\xi|\rightarrow 0\]along $\hh$. Consider the solution $\psi$ constructed in Section \ref{sec:ConstructionOfPsiWithCompactlySupportedInitialData} such that $\Sigma_{2}=K$. For any constant $\epsilon>0$ the function $\xi+\epsilon\psi$ is a smooth solution  for which higher order transversal to $\hh$ derivatives blow up asympotically along $\hh$. In particular, if $\xi$ is not axisymmetric and is initially supported on $K$ then  $\xi+\epsilon\psi$ is also not axisymmetric  and is initially supported on $K$. This completes the argument\footnote{Note that the above argument effectively shows that the  codimension of the set of the ``stable  initial data'' is strictly positive.  }.

\subsection{Initial perturbations on $t=0$}
\label{sec:TheRelevanceOfTheHypersurfaceT0}

Consider the spacelike (complete) hypersurface $t=0$. Consider now $\Sigma$ to be a hypersurface which crosses $\hh$, is such that $\Sigma\cap \left\{r\geq R_{1}\right\}=\left\{t=0\right\}\cap\left\{r\geq R_{1}\right\}$ and satisfies the condition \eqref{condition}. We also consider the subsets $\Sigma_{0},\Sigma_{1},\Sigma_{2},\Sigma_{3}$ of $\Sigma$ as defined above (see also figure below). Let $\mathcal{R}$ be the domain of dependence of $\Sigma_{1}\cup\Sigma_{2}\cup\Sigma_{3}$.

Let also $\Sigma_{1}'=\left\{t=0\right\}\cap \left\{r\leq R_{1}\right\}$. Consider now compactly supported  initial data on $t=0$ such that they are trivial on $\Sigma_{1}'$ and $\Sigma_{3}$ and coincide with the data of Proposition \ref{p11} on $\Sigma_{2}$. Let $\psi_{1}$ be the unique solution that arises from such data.  By the domain of dependence property, the solution $\psi_{1}$    is trivial in the domain of dependence of $\Sigma_{1}'$. Hence the data of $\psi_{1}$ on $\Sigma_{1}$ is also trivial.

\begin{figure}[H]
	\centering
		\includegraphics[scale=0.11]{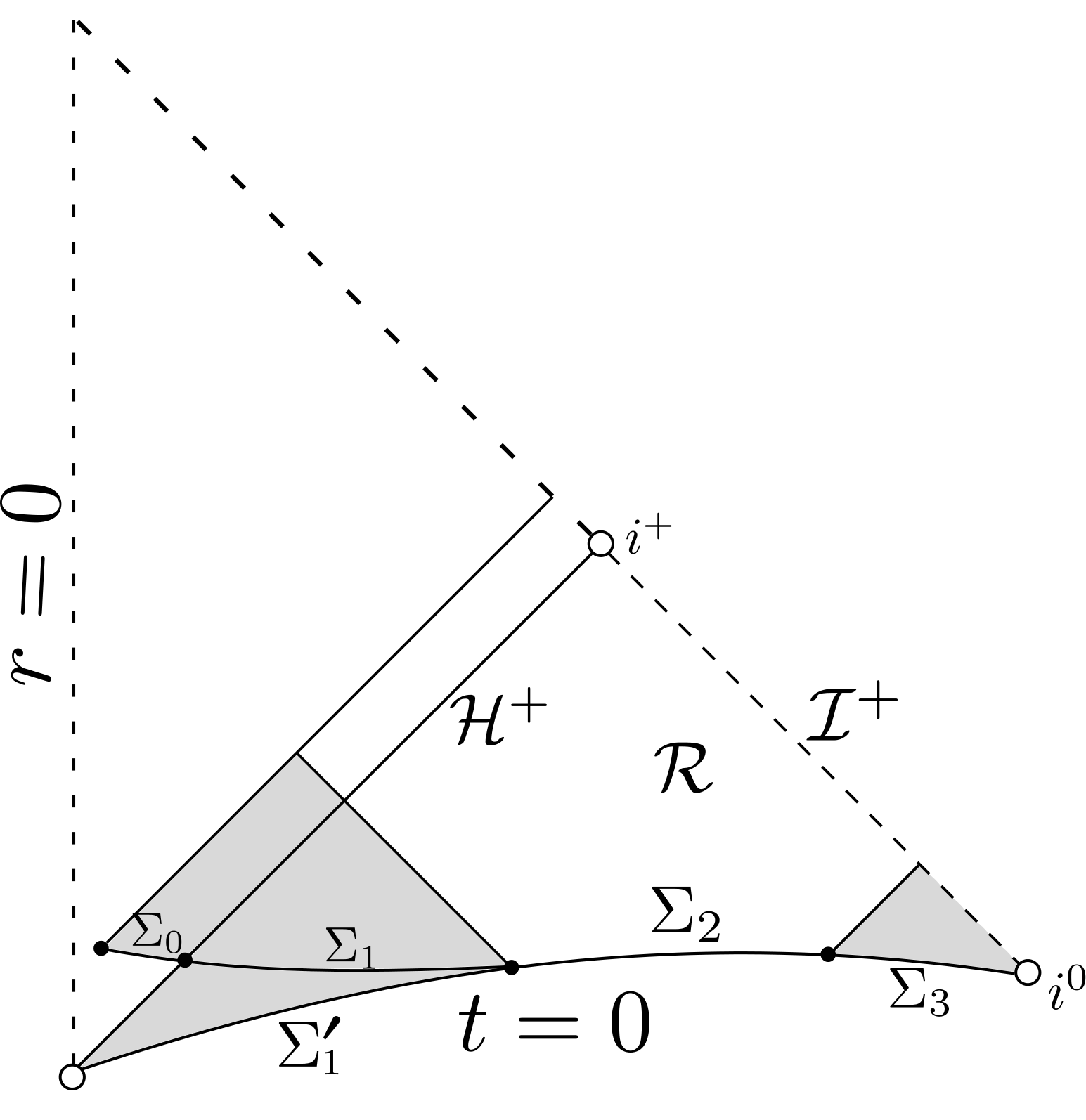}
	\label{fig:csind5}
\end{figure}

Therefore, the solutions $\psi_{1},\psi$ must necessarily coincide in the region $\mathcal{R}$.  By virtue of the smoothness of $\psi$ in $D^{+}(\Sigma)$ we have that  $\psi_{1}$ extend smoothly to  $\r\subset D^{+}(\Sigma)$ and the higher order transversal derivatives of $\psi_{1}$ blow up asymptotically along $\hh$.

Hence, by the above and the discussion in Remark \ref{sec:GenericityOfInstabilities} we deduce that for generic compactly supported initial data on $t=0$, the higher order transversal derivatives of $\psi_{1}$ blow up asymptotically along $\hh$.

\subsection{Extremal Reissner--Nordstr\"{o}m}
\label{sec:ExtremalReissnerNordstrOM}

Finally, let us mention that our method applies also for extremal Reissner--Nordstr\"{o}m black holes, i.e.~we can replace extremal Kerr with extremal Reissner--Nordstr\"{o}m in Theorem \ref{theo1}. In fact, in this case, one only requires $\Sigma_{0}$ to be spherically symmetric (without any additional restrictions such as \eqref{condition}). Note also that one needs to prescribe spherically symmetric initial data for $\phi$ on $\Sigma_{0}$.

\section{Acknowledgements}
\label{sec:Acknowledgements}

I would like to thank Mihalis Dafermos for his constant encouragement and help. I would also like to thank Sergio Dain and Gustavo Dotti for suggesting the problem and Harvey Reall for several helpful discussions. 

\bibliographystyle{acm}
\bibliography{../../../bibliography}

\end{document}